\documentclass{amsart}
\usepackage{mathrsfs,amsmath,amssymb,amsfonts}

\usepackage{tikz}
\usetikzlibrary{automata,positioning,arrows.meta}

\usepackage{todonotes}

\newcommand\e{\mathrm{e}}

\setcounter{MaxMatrixCols}{10}

\newtheorem{theorem}{Theorem}
\theoremstyle{plain}

\newtheorem{corollary}{Corollary}

\newtheorem{proposition}{Proposition}

\theoremstyle{definition}
\newtheorem{definition}{Definition}
\newtheorem{example}{Example}

\begin{document}
\title[]{Opacity complexity of automatic sequences.\\ The general case}
\author{Jean-Paul ALLOUCHE and Jia-Yan YAO}
\date{\today}
\keywords{finite automaton, automatic sequence, opacity, complexity}
    
\begin{abstract}
In this work we introduce a new notion called opacity complexity to measure the complexity of automatic 
sequences. We study basic properties of this notion, and exhibit an algorithm to compute it. As applications, 
we compute the opacity complexity of some well-known automatic sequences, including in particular constant 
sequences, purely periodic sequences, the Thue-Morse sequence, the period-doubling sequence,
the Golay-Shapiro(-Rudin) sequence, the paperfolding sequence, the Baum-Sweet sequence, the Tower 
of Hanoi sequence, and so on.

\medskip

\noindent
{\bf Keywords:} Opacity. Opacity complexity. Automatic sequences.
\end{abstract}

\maketitle

\begin{flushright}
    \thispagestyle{empty}
    \vspace{-5mm}
    {\sl Dedicated to the memory of Michel Mend\`es France}
    \vspace{2mm}
\end{flushright}

\markboth{Jean-Paul ALLOUCHE, and Jia-Yan YAO}{Opacity complexity of automatic sequences}%

\section{Introduction}

We begin with some definitions and notation (see for 
example \cite{eil} and \cite{AS}).

Let $A$ be a finite nonempty set. We call it an \emph{alphabet}. We let $\# A$ denote
the number of elements in $A$. Each element of $A$ is called a \emph{letter}. 
We fix $\varepsilon $ an element not in $A$, called the \emph{empty word} over $A$.

Let $\mathbb{N}=\{0,1,\ldots \}$ be the set of all natural numbers and let $%
n\in \mathbb{N}$. If $n = 0$, we define $A^{0}:=\{\varepsilon \}$, and if $n \neq 0$, we let 
$A^{n}$ denote the set of all finite sequences with terms in $A$ and of length $n$. Finally we set%
\begin{equation*}
A^{\ast }:=\bigcup\limits_{n=0}^{\infty }A^{n}\text{ and }\widehat{A}
:=A^{\ast }\cup A^{\mathbb{N}}.
\end{equation*}%
An element $w$ in $\widehat{A}$ is called a \emph{finite word} if $w\in A^{\ast }$ 
and an \emph{infinite word} if $w\in A^{\mathbb{N}}$. We let $|w|$ denote the
length of $w$. More precisely, we have $|w|=$ $n$ if $%
w\in A^{n}$, and $|w|=+\infty $ if $w\in A^{\mathbb{N}}$. In particular, we
have $|\varepsilon |=0$.

Let $w=(w(n))_{0\leqslant n<|w|}\in A^{\ast }$ and $v=(v(n))_{0\leqslant
n<|v|}\in \widehat{A}$ be two words over $A$. The \emph{concatenation} or
\emph{product} between $w$ and $v$, denoted by $w\ast v$ (or more simply by $wv$), is again a word of length $|w|+|v|$ over $A$, defined as follows:
\begin{equation*}
(w\ast v)(n)=\left\{
\begin{array}{cl}
w(n), & \mathrm{if}\text{ }0\leqslant n<|w|, \\
v(n-|w|), & \mathrm{if}\text{ }|w|\leqslant n<|w|+|v|.%
\end{array}%
\right.
\end{equation*}%
Therefore $w\varepsilon =\varepsilon w=w$ for all $w\in A^{\ast }$. Clearly $%
(A^{\ast },\ast )$ is a monoid with $\varepsilon $ as the identity element.
By induction, we can also define the concatenation of a finite or even an
infinite number of words over $A$. Thus every $w=(w(n))_{0\leqslant
n<|w|}\in \widehat{A}$ can be represented by a finite or an infinite product%
\begin{equation*}
w=\prod\limits_{n=0}^{|w|-1}w(n):=w(0)w(1)\cdots ,
\end{equation*}%
and every \emph{prefix} of $w$ can be written as $w[0,n] := w(0)\cdots w(n)$,
with $0\leqslant n<|w|$.

Let $w=(w(n))_{0\leqslant n<|w|}$ and $v=(v(n))_{0\leqslant n<|v|}$ be two
words over $A$. We define%
\begin{equation*}
\mathbf{d}_{A}(w,v)=2^{-\inf \{n\ :\ w(n) \neq v(n),\ 0 \leqslant n< \min (|w|,|v|)\}}
\end{equation*}%
if $w\neq v$, and $\mathbf{d}_{A}(w,v) = 0$ if $w = v$. Clearly $\mathbf{d}_{A}$
is a metric over $\widehat{A}$. Endowed with this metric, $\widehat{A}$ becomes a
compact metric space which contains $A^{\ast }$ as a dense subset. Finally we
remark that $A^{\mathbb{N}}$ is a compact subspace of $\widehat{A}$.
\medskip

From now on, we fix $k\geqslant 2$ an integer, and $\Sigma_k=\{0,1,\ldots, k-1\}$.
\medskip

A \emph{finite $k$-automaton} is a quadruple $\mathscr{A}=(S,i_0,\Sigma_k ,t)$\
which consists of

\begin{itemize}
\item an alphabet $S$ of states; one of the states, say $i_0$, is
distinguished and called the \emph{initial state}.

\item a map $t:S\times \Sigma_k \rightarrow S$, called the \emph{%
transition function}. \
\end{itemize}

\medskip

For all $s\in S$, put $t(s,\varepsilon )=s$. Then extend $t$ over $S\times\Sigma ^{\ast }_k$ 
(still denoted by $t$) such that $t(s,\sigma \eta):=t(t(s,\sigma ),\eta )$, for all $s\in S$ and all 
$\sigma ,\eta \in \Sigma^{\ast }_k$. The finite $k$-automaton $\mathscr{A}$ also induces a
map (also denoted by $\mathscr{A}$) from $\widehat{\Sigma }_k$ to $\widehat{S}$ defined by
\begin{equation*}
(\mathscr{A}\eta )(m):=t(i_0,\eta \lbrack 0,m])=t(i,\eta (0)\cdots \eta (m)),
\end{equation*}%
for all $\eta \in \widehat{\Sigma}_k$ and all $m\in \mathbb{N}$ ($0\leqslant
m<|\eta |$).

It is useful to give a pictorial representation of $\mathscr{A}=(S,i_0,\Sigma_k,t)$.
States will be represented by points or nodes or vertices. For all $s\in S$ and all $\sigma \in \Sigma_k$,
we link $s$ to $t(s,\sigma )$ by a (directed) arrow, labelled $\sigma $. This arrow (also called edge) is said
of \emph{type }$\sigma $ and denoted by $(s,\sigma ,t(s,\sigma ))$ (\emph{i.e.}, 
treated as an element in $S\times \Sigma_k \times S$) where $s$ is the
starting-point, $\sigma $ is the label or type of the arrow,
and $t(s,\sigma)$ is the endpoint. In this way, by linking sequentially all the edges together,
each $\eta \in \widehat{\Sigma}_k$ defines a path (noted $\mathfrak{p}_{\eta}$)
on the graph $\mathscr{A}$ (the path is infinite if $\eta \in \Sigma_k^{\mathbb{N}}$) as follows:
\begin{eqnarray}\label{eq0}
\mathfrak{p}_{\eta}=\big(i_0, \eta(0), t(i_0,\eta(0))\big)\big(t(i_0,\eta(0)),\eta(1), t(i_0,\eta(0)\eta(1))\big)\cdots.
\end{eqnarray}
In this work we only need consider paths of the above form, which begin from the initial state. Later in the next work \cite{allyao},
we shall be obliged to consider more general paths which can begin from any state.

Below we shall constantly identify $\mathscr{A}$ with its graph (and we use a horizontal incident arrow to mark
the initial state). Then $S$ becomes the set of vertices and $\Sigma_k $
becomes the set of labels or types of arrows. When we talk of a path,
we always mean that the path is finite unless otherwise indicated.

Let $r,s$ be two states of $\mathscr{A}=(S,i_0,\Sigma_k,t)$. We say that $s$ is
\emph{accessible from} $r$ if there exists $\sigma \in \Sigma_k ^{\ast }$
such that $s=t(r,\sigma )$. So $s$ is accessible from itself for 
$t(s,\varepsilon )=s$. A state of $\mathscr{A}$ is said \emph{accessible} if
it is accessible from the initial state $i_0$, and we
call $\mathscr{A}$ an \emph{accessible} (resp.~\emph{strictly accessible})
automaton if every state of $\mathscr{A}$ is accessible (resp.~for all
states $r$ and $s$, $r$ is accessible from $s$ and \emph{vice versa}). From
now on, all finite $k$-automata in discussion will be supposed (implicitly)
accessible, and we let $\mathrm{AUT}_k$ denote the set of all such finite $k$-automata.

Fix $Y$ a nonempty set. Let $o$ be a map defined on $S$ with values in $Y$. We shall
call the couple $(\mathscr{A},o)=(S,i_0,\Sigma_k,t,o)$ a\emph{\ finite $k$-automaton with output} and $o$ the \emph{output function} of $\mathscr{A}$. Just like the finite $k$-automaton $\mathscr{A}$, this
couple also induces a map (still denoted by $(\mathscr{A},o)$) from $\widehat{\Sigma }_k$ to $\widehat{o(S)}$ such that
\begin{equation*}
\forall\sigma \in \widehat{\Sigma}_k\ \text{and}\ \forall m\in\mathbb{N}\ (0\leqslant m<|\sigma |),\ 
\text{we have}\ (\mathscr{A},o)(\sigma)(m):=o((\mathscr{A}\sigma )(m)).
\end{equation*}%
Often, to simplify the notation, we also let $o(\mathscr{A}\sigma)$ denote $(\mathscr{A},o)(\sigma)$.

A sequence $u=(u(n))_{n\geqslant 0}$ with terms in $Y$ will be called a \emph{$k$-automatic
sequence }if there exists a finite $k$-automaton with output $(\mathscr{A},o)=(S,i_0,\Sigma _{k},t,o)$
such that $u(0)=o(i_0)$, and $u(n)=o(t(i_0,n_{m}\cdots n_{0}))$ for all integers $n\geqslant 1$ with standard
$k$-ary expansion $n=\sum\limits_{j=0}^{m}n_{j}k^{j}$. In this case, we also say
that $u$ is generated by $(\mathscr{A},o)$ (resp.~by $\mathscr{A}$ if $o$ is the identity map on $S$). Note that we can suppose in addition $t(i_0,0)=i_0$, and we call such an $\mathscr{A}$ an internal finite $k$-automaton of $u$. 

Indeed, if $u$ is generated by $(\mathscr{A},o)=(S,i_0,\Sigma _{k},t,o)$ 
with $t(i_0,0)\neq i_0$, then by adding a new state $i^{\prime }_0$ to 
$S$ and defining
\begin{eqnarray*}
&&S^{\prime}=S\cup \{i^{\prime }_0\},\\
&&t^{\prime }\big|_{S\times \Sigma _{k}}=t,\
t^{\prime }(i^{\prime}_0,0)=i^{\prime }_0,\ 
t^{\prime }(i^{\prime}_0,\sigma)=t(i_0,\sigma)\ (\forall \sigma\in \Sigma _{k}\setminus \{0\}),\\
&&o^{\prime }\big|_{S}=o,\ o^{\prime }(i^{\prime}_0)=o(i_0),
\end{eqnarray*}
we obtain a new finite 
$k$-automaton with output $(\mathscr{A}^{\prime },o^{\prime })=(S^{\prime},i^{\prime}_0,\Sigma _{k},t^{\prime },o^{\prime })$, which generates $u$ and satisfies $t'(i'_0,0)=i'_0$.

In what follows we shall let $\mathrm{AUT}_k(u)$ 
denote the set of all internal finite $k$-automata of $u$.
By the above discussion, we have $\mathrm{AUT}_k(u)\neq \emptyset$.  

Note here that in our definition of finite automata, we have not mentioned the notion of terminal state, that the preceding definition corresponds to the classical notion of complete
(deterministic) automaton where all states are final (cf. \cite{eil}), and that when we define the opacity of a finite automaton, we do not want to recognize a language, but only consider the finite automaton as a machine (transducer) which transforms a sequence into another one. For this point of view and related studies, see for example \cite{liardet,carton2,carton1}.

\medskip

Now we give two examples to illustrate the above 
definitions and notation.

\begin{example}
\label{ex1}{\em (One-state automaton)} Let $S=\{i_0\}.$ For all $\sigma \in \Sigma_k $,
put $t(i_0,\sigma )=i_0$. The finite $k$-automaton $\mathscr{I}_{k}=(S,i_0,\Sigma_k,t)$ is strictly accessible and generates the constant
sequence $i_0i_0i_0\cdots $.
\end{example}

\begin{example}
\label{ex2} {\em (Identity automaton)} Let $S=\{A,B\}$, $i_0=A$, $\Sigma_2=\{0,1\}$,
and define the transition function $t$ by $t(A,0)=A$, $t(B,0)=A$, $t(A,1)=B$, and $t(B,1)=B$. The finite $2$-automaton 
$\mathscr{A}_{id}=(S,i_0,\Sigma_2,t)$
is strictly accessible, and if we define $o(A)=0$ and $o(B)=1$, then $(\mathscr{A}_{id},o)(\eta )=\eta $, for all $\eta \in \widehat{\Sigma}_2$.
Moreover the $2$-automatic sequence generated by $\mathscr{A}_{id}$ is simply the
periodic sequence $ABAB\cdots .$%
\vskip -0.35cm

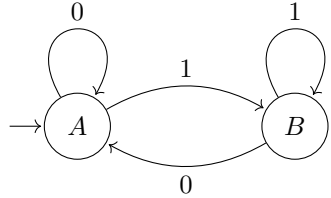
\begin{figure}[ht]
    
\begin{tikzpicture}[shorten >=1pt,node distance=2cm,auto, initial text=]
 
  \node[initial, state]  (A)              {$A$};
  \node[state]  (B) [right=of A] {$B$};

  
  \path[->] (A) edge [in=60,out=120,loop] node[above] {$0$} (A)
             (A) edge [bend left, above]   node {$1$} (B)
             (B) edge [bend left, below]   node {$0$} (A)
            (B) edge [in=60,out=120,loop] node[above] {$1$} (B) ; 
\end{tikzpicture}
\vskip -0.15cm

\centerline{Output function: $o(A) = 0$, $o(B) = 1$}
\vskip -0.4cm
\caption{Identity automaton $\mathscr{A}_{id}$} \label{iden}
\end{figure}

\end{example}

\section{Basic properties of opacity complexity}
In this section, we shall introduce the notion of 
opacity complexity of $k$-automatic sequences. We begin 
with the notion of opacity of finite automata. For this, 
we fix $Y$, a nonempty set.

\begin{definition}\label{def1} Let $\mathscr{A}=(S,i_0,\Sigma_k,t)$ be a finite $k$-automaton.  Define
\begin{eqnarray*}
\Omega_{\mathbf{d}} (\mathscr{A})=\sup_{\sigma \in \Sigma_k ^{\mathbb{N}}}\inf_{o\in Y ^{S}}\mathbf{d}(o(\mathscr{A}\sigma ),\sigma ),
\end{eqnarray*}
and call it the opacity of $\mathscr{A}$ attached to 
$\mathbf{d}$, where $\mathbf{d}$ is a ``prefixed'' comparison 
method (\emph{i.e.}, a comparison method where two elements having 
a long common prefix are ``close'') to measure the difference 
between the inputs and outputs.
\end{definition}

\noindent {\bf Remarks.} { 1. The comparison method $\mathbf{d}$ in the definition need be indicated in concrete 
problems. It is important to point out that $\mathbf{d}$ is crucial in the study of 
opacity theory: the theory may be quite different for different $\mathbf{d}$ (see \cite{yao,yao1}). Later in the present work 
we shall discuss in detail the case where $Y=\Sigma_k$ and  $\mathbf{d}=\mathbf{d}_Y$, while in the next work \cite{allyao},
we shall investigate the case where $Y=\mathbb{C}$ and $\mathbf{d}=\mathbf{d}_2$, defined for all $u\in \mathbb{C}^{\mathbb{N}}$
and all $\sigma \in \Sigma_k^{\mathbb{N}}$ by
\begin{eqnarray*}
\mathbf{d}_2(u,\sigma)=\limsup_{N\rightarrow\infty}\Big(\frac{1}{N}\sum_{m=0}^{N-1}\big|u(m)-\e(\sigma(m))\big|^2 \Big)^{1/2}.
\end{eqnarray*}
Here we use $\e(\sigma(m)):=\e^{\frac{2\sigma(m)\pi}{k}i}$ instead of $\sigma(m)$ to equilibrate the elements in $\Sigma_k$.
Example \ref{baum-sweet} shows that the two cases may be quite different for special situations. Indeed the opacity attached to $\mathbf{d}_{\Sigma_k}$ depends heavily on the starting homogeneous parts of paths issued from the initial state (see Theorem~\ref{thm2} below), while the opacity attached to $\mathbf{d}_2$ depends on simple circuits, {\em i.e.}, simple cyclic paths (see \cite{allyao} or \cite{yao2}).

2. The opacity of $\mathscr{A}$ measures in a certain sense the intrinsic noise produced by the default of $\mathscr{A}$.
Opacity theory of finite automata begun with M. Mend\`{e}s France in his
pioneer work \cite{men}, where he considered deterministic finite automata
with inputs $\pm $ and chose as comparison method the quadratic semi-norm.
It was then generalized and developed systematically by J.-Y. Yao to more
general deterministic finite automata in \cite{yao,yao1,yao2} (see also \cite{cheyao, huchen}) from the point of view of information transmission theory.

3. The term ``opacity'' has quite different meanings in information theory (mainly about security). See for example \cite{berard2015,laf2018,bar2021,hem2021,yang2021,win2022,Li2023,vil} and the references therein.}

\medskip 

Let $\mathscr{A}=(S,i_0,\Sigma_k ,t)$ and $\mathscr{A}^{\prime }=(S^{\prime},i^{\prime}_0,\Sigma_k ,t^{\prime })$ 
be two finite $k$-automata. We
call $\mathscr{A}^{\prime }$ a \emph{factor} of $\mathscr{A}$ (see, \emph{e.g.}, \cite{kam}) if there exists a {\em surjective} map $\lambda:S\rightarrow S^{\prime }$ such that 
$i^{\prime }_0=\lambda (i_0)$, and $t^{\prime }(\lambda (s),\sigma)=\lambda (t(s,\sigma ))$,
for all $s\in S$ and all $\sigma \in \Sigma_k$. In this case, we 
call $\lambda $ a \emph{$k$-automaton homomorphism of }$\mathscr{A}$, 
and write $\mathscr{A}^{\prime }=\lambda (\mathscr{A})$. From 
Definition~\ref{def1} we obtain at once
\begin{eqnarray}\label{eq1}
\Omega_{\mathbf{d}}(\mathscr{A})\leqslant\sup_{\sigma \in \Sigma_k ^{\mathbb{N}}}\inf_{o'\in Y ^{S'}}\mathbf{d}\big((o'\circ \lambda)(\mathscr{A}\sigma ),\sigma \big)
=\Omega_{\mathbf{d}} (\lambda (\mathscr{A})),
\end{eqnarray}
which means that {\em a finite $k$-automaton with simpler structure has a larger opacity.}
Note that the one-state $k$-automaton $\mathscr{I}_{k}$, which is the finite $k$-automaton with the simplest structure, is a factor of $\mathscr{A}$, hence $\Omega_{\mathbf{d}} (\mathscr{A})\leqslant \Omega_{\mathbf{d}} (\mathscr{I}_{k})$, from which we deduce
\begin{eqnarray*}
M_k:=\sup\limits_{\mathscr{A}\in \mathrm{AUT}_k}\Omega_{\mathbf{d}} (\mathscr{A})=\Omega_{\mathbf{d}} (\mathscr{I}_{k}).
\end{eqnarray*}

\begin{definition} Let $\mathscr{A}=(S,i_0,\Sigma_k, t)$ be a finite $k$-automaton.
\begin{itemize}
\item[{(1)}] We say that $\mathscr{A}$ is transparent if $\Omega_{\mathbf{d}} (\mathscr{A})=0$.

\item[{(2)}] We say that $\mathscr{A}$ is opaque if $\Omega_{\mathbf{d}} (\mathscr{A})=M_k$.
\end{itemize}
\end{definition}

Let $\lambda $ be a $k$-automaton homomorphism of $\mathscr{A}$. If $\lambda $ is also injective,
then its inverse map $\lambda ^{-1}$ is a $k$-automaton homomorphism of $\lambda (\mathscr{A})$, and we
call $\lambda $ a \emph{$k$-automaton isomorphism of} $\mathscr{A}$,
and say that $\mathscr{A}$ and $\mathscr{A}^{\prime }$ are \emph{isomorphic}, noted $\mathscr{A}\simeq \mathscr{A}^{\prime }$. In this case, we have $\Omega_{\mathbf{d}} (\mathscr{A})=\Omega_{\mathbf{d}} (\mathscr{A}')$.
Intuitively two finite $k$-automata are isomorphic if and only if, 
up to the names of states, they have the same graph. From now on, 
we shall always identify isomorphic finite $k$-automata and use, 
if no confusion is possible, the same symbols $\mathscr{A}, \mathscr{B}$, and so on, for finite $k$-automata and for classes 
of isomorphic finite $k$-automata. In particular, up to 
isomorphism, there exists only one one-state 
$k$-automaton $\mathscr{I}_k$.

\begin{definition} Let $u=(u(n))_{n\geqslant 0}$ be $k$-automatic with terms in $Y$.  We define
\begin{eqnarray*}
\Omega_{\mathbf{d}} (u)=\sup_{\mathscr{A}\in \mathrm{AUT}_k(u)}\Omega_{\mathbf{d}} (\mathscr{A}),\ \textrm{and}\ \varpi_{\mathbf{d}} (u)=\frac{\Omega_{\mathbf{d}} (u)}{M_k},
\end{eqnarray*}
and call them respectively the opacity of $u$ and the opacity complexity of $u$.
\begin{itemize}
\item[{(1)}] We say that $u$ is transparent if $\Omega_{\mathbf{d}} (u)=0$, \emph{i.e.}, 
$\varpi_{\mathbf{d}} (u)=0$;

\item[{(2)}] We say that $u$ is opaque if $\Omega_{\mathbf{d}} (u)=M_k$, \emph{i.e.}, 
$\varpi_{\mathbf{d}} (u)=1$.
\end{itemize}
\end{definition}

\noindent {\bf Remark.} For a given infinite word (sequence) $u$, there are two ways to measure its complexity.
One is based on the internal structure of $u$. A type example is the subword complexity function $p_u(n)$ which counts
the number of different subwords of length $n$ in $u$ (see for example \cite[p.\,298]{AS}). The other uses external tools to locate $u$ in the complexity hierarchy.
A typical example is the transducer degrees introduced in \cite[p.\,830]{en}, which compare two infinite words $w,u$, 
and write $w\geqslant u$, if $u$ is the image of $w$ under a sequential finite-state transducer. 
Our complexity belongs to the first type, for $\mathrm{AUT}_k(u)$ is uniquely determined by $u$ itself. Below we show in Theorem \ref{pr2} that the sup in the definition
can be achieved by a special finite $k$-automaton $\mathscr{A}_u$, treated as a sequential finite-state transducer. However there $\mathscr{A}_u$ is not an external tool to compare $u$ with the others, but a part of $u$ to define the complexity.

\medskip

Let $(\mathscr{A},o)=(S,i_0,\Sigma_k,t,o)$ be a finite $k$-automaton with
output. Two states $r,s$ of $\mathscr{A}$ are said \emph{indistinguishable}
if $o(t(r,\sigma ))=o(t(s,\sigma ))$ for all $\sigma \in \Sigma_k^*$.
Otherwise we call them \emph{distinguishable}. If all distinct states of $%
\mathscr{A}$ are distinguishable, then $(\mathscr{A},o)$ is called \emph{%
minimal}.

Let $(\mathscr{A},o)$ and $(\mathscr{A}^{\prime },o^{\prime })$ be two
finite $k$-automata with output. If for all $\eta \in \Sigma_k^*$, we have $(\mathscr{A},o)(\eta )=(\mathscr{A}^{\prime },o^{\prime })(\eta
)$, then we call them \emph{equivalent} and write $(\mathscr{A},o)\approx (\mathscr{A}^{\prime },o^{\prime })$. If in addition $\mathscr{A}\simeq
\mathscr{A}^{\prime }$, then we call them \emph{isomorphic} and write $(%
\mathscr{A},o)\cong (\mathscr{A}^{\prime },o^{\prime })$.
It is well known that two equivalent minimal $k$-automata are isomorphic, and every finite $k$-automaton with 
output is equivalent to some minimal $k$-automaton 
(see for example \cite{con}).
This result can be slightly specified by the following 
one, which shows in a certain sense that a minimal 
$k$-automaton is in fact the least common factor of all
finite $k$-automata with output, which are equivalent to 
it (see for example \cite[Proposition 9, p.\,262]{yao3}).

\begin{proposition}\label{pr1}
For each finite $k$-automaton with output $(\mathscr{A},o) $,
there exists a minimal $k$-automaton $(\mathscr{A}^{\prime },o^{\prime })$ (unique up to isomorphism)
such that $\mathscr{A}^{\prime }$ is a factor of $\mathscr{A}$, and $(\mathscr{A},o)\approx (\mathscr{A}^{\prime },o^{\prime })$.
\end{proposition}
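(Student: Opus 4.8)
The plan is to realize the desired minimal automaton as the quotient of $\mathscr{A}$ by the indistinguishability relation, and then to reduce uniqueness to the cited fact that two equivalent minimal $k$-automata are isomorphic. First I would define a binary relation $\sim$ on $S$ by declaring $r\sim s$ precisely when $r$ and $s$ are indistinguishable, that is, $o(t(r,\sigma))=o(t(s,\sigma))$ for every $\sigma\in\Sigma_k^*$. That $\sim$ is an equivalence relation is immediate from its definition. The key structural observation is that $\sim$ is a \emph{congruence} for the transition function: if $r\sim s$, then for every $\tau\in\Sigma_k$ and every $\sigma\in\Sigma_k^*$ one has $o(t(t(r,\tau),\sigma))=o(t(r,\tau\sigma))=o(t(s,\tau\sigma))=o(t(t(s,\tau),\sigma))$, so $t(r,\tau)\sim t(s,\tau)$. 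Taking $\sigma=\varepsilon$ in the definition also shows that indistinguishable states share the same output, $o(r)=o(s)$.

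Using this, I would build the quotient automaton $(\mathscr{A}',o')$ with state set $S':=S/\!\sim$, initial state $[i_0]$, transition $t'([s],\tau):=[t(s,\tau)]$, and output $o'([s]):=o(s)$; both $t'$ and $o'$ are well defined by the congruence property and the equality of outputs just noted. The canonical surjection $\lambda:S\to S'$, $\lambda(s)=[s]$, then satisfies $\lambda(i_0)=[i_0]=i'_0$ and $t'(\lambda(s),\tau)=[t(s,\tau)]=\lambda(t(s,\tau))$, so $\lambda$ is a $k$-automaton homomorphism and $\mathscr{A}'=\lambda(\mathscr{A})$ is a factor of $\mathscr{A}$.

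Next I would verify equivalence and minimality. Extending $t'(\lambda(s),\tau)=\lambda(t(s,\tau))$ from letters to words $\eta\in\Sigma_k^*$ by a straightforward induction gives $(\mathscr{A}',o')(\eta)(m)=o'([t(i_0,\eta[0,m])])=o(t(i_0,\eta[0,m]))=(\mathscr{A},o)(\eta)(m)$, whence $(\mathscr{A},o)\approx(\mathscr{A}',o')$. For minimality, if $[r]\neq[s]$ then $r\not\sim s$, so some $\sigma\in\Sigma_k^*$ witnesses $o(t(r,\sigma))\neq o(t(s,\sigma))$; transporting through $\lambda$ yields $o'(t'([r],\sigma))\neq o'(t'([s],\sigma))$, so the two classes are distinguishable, and $(\mathscr{A}',o')$ is minimal.

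Finally, uniqueness up to isomorphism requires no new work: if $(\mathscr{A}'',o'')$ is any minimal $k$-automaton with $(\mathscr{A},o)\approx(\mathscr{A}'',o'')$, then $(\mathscr{A}',o')\approx(\mathscr{A}'',o'')$ by transitivity of $\approx$, and the cited fact that two equivalent minimal $k$-automata are isomorphic forces $(\mathscr{A}',o')\cong(\mathscr{A}'',o'')$. The only point demanding genuine care is the congruence property of $\sim$, since the well-definedness of $t'$ and $o'$, the factor claim, and minimality all hinge on it; everything else is routine bookkeeping built on the already-established identity $t(s,\tau\sigma)=t(t(s,\tau),\sigma)$.
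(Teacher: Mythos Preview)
The paper does not supply its own proof of this proposition; it merely records the statement and refers the reader to \cite{con} and \cite[Proposition~9, p.\,262]{yao3}. Your argument is correct and is precisely the standard quotient-by-indistinguishability (Nerode) construction one finds in such references: the relation $\sim$ is indeed a congruence compatible with the output, the quotient map $\lambda$ exhibits $\mathscr{A}'$ as a factor, equivalence and minimality follow as you indicate, and uniqueness is exactly the fact the paper cites just before the proposition. One small point you leave implicit but which is worth a sentence in this paper's conventions: since all automata are assumed accessible, you should note that $\mathscr{A}'$ inherits accessibility from $\mathscr{A}$ via the surjection $\lambda$, which is immediate. Otherwise there is nothing to add.
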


Let $u=(u(n))_{n\geqslant 0}$ be $k$-automatic with terms in $Y$. Let $\mathscr{A}_u\in\mathrm{AUT}_k(u)$
denote the common factor of all  finite $k$-automata in $\mathrm{AUT}_k(u)$ (whose existence is guaranteed 
by Proposition \ref{pr1}), called the intrinsic finite $k$-automaton of $u$ (note here that it is also the finite $k$-automaton in $\mathrm{AUT}_k(u)$ with the smallest number of states). By Proposition~\ref{pr1} and  
Formula~(\ref{eq1}), we obtain immediately the following result which reduces the computation of the opacity 
complexity of $k$-automatic sequences to the computation of the opacity of finite $k$-automata.

\begin{theorem}\label{pr2}
Let $u=(u(n))_{n\geqslant 0}$ be $k$-automatic with terms in $Y$. Then
\begin{eqnarray*}
\Omega_{\mathbf{d}} (u)=\Omega_{\mathbf{d}} (\mathscr{A}_u),\ \textrm{and}\ \varpi_{\mathbf{d}} (u)=\frac{\Omega_{\mathbf{d}} (\mathscr{A}_u)}{M_k}.
\end{eqnarray*}
\end{theorem}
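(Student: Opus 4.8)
The plan is to deduce Theorem~\ref{pr2} directly from the two facts already assembled in the excerpt: Proposition~\ref{pr1}, which produces a minimal factor, and the monotonicity inequality~(\ref{eq1}), which says that passing to a factor can only increase opacity. The key observation to exploit is that $\mathscr{A}_u$ is a common \emph{factor} of every automaton in $\mathrm{AUT}_k(u)$, so inequality~(\ref{eq1}) applies uniformly with $\mathscr{A}_u$ playing the role of $\lambda(\mathscr{A})$.

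First I would recall the definition $\Omega_{\mathbf{d}}(u)=\sup_{\mathscr{A}\in\mathrm{AUT}_k(u)}\Omega_{\mathbf{d}}(\mathscr{A})$ and fix an arbitrary $\mathscr{A}\in\mathrm{AUT}_k(u)$. Since $\mathscr{A}_u$ is a factor of $\mathscr{A}$, there is a $k$-automaton homomorphism $\lambda$ with $\lambda(\mathscr{A})=\mathscr{A}_u$, and~(\ref{eq1}) gives $\Omega_{\mathbf{d}}(\mathscr{A})\leqslant\Omega_{\mathbf{d}}(\lambda(\mathscr{A}))=\Omega_{\mathbf{d}}(\mathscr{A}_u)$. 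As this bound is the same for every $\mathscr{A}$ in the family, taking the supremum over $\mathscr{A}\in\mathrm{AUT}_k(u)$ yields
\begin{equation*}
\Omega_{\mathbf{d}}(u)=\sup_{\mathscr{A}\in\mathrm{AUT}_k(u)}\Omega_{\mathbf{d}}(\mathscr{A})\leqslant\Omega_{\mathbf{d}}(\mathscr{A}_u).
\end{equation*}

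For the reverse inequality I would simply note that $\mathscr{A}_u$ is itself an element of $\mathrm{AUT}_k(u)$ (it is an internal finite $k$-automaton of $u$, as stated in the paragraph preceding the theorem). Hence it is one of the automata over which the supremum is taken, giving $\Omega_{\mathbf{d}}(\mathscr{A}_u)\leqslant\sup_{\mathscr{A}\in\mathrm{AUT}_k(u)}\Omega_{\mathbf{d}}(\mathscr{A})=\Omega_{\mathbf{d}}(u)$. Combining the two inequalities establishes $\Omega_{\mathbf{d}}(u)=\Omega_{\mathbf{d}}(\mathscr{A}_u)$, and dividing by $M_k$ gives the stated formula for $\varpi_{\mathbf{d}}(u)$.

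The argument is essentially a one-line sandwich once the right objects are in place, so there is no serious computational obstacle. The only point requiring care — and the place where the real content sits — is the justification that a single automaton $\mathscr{A}_u$ can serve simultaneously as a factor of \emph{every} member of $\mathrm{AUT}_k(u)$. This is precisely what Proposition~\ref{pr1} together with the uniqueness-up-to-isomorphism of the minimal automaton guarantees, but I would want to verify that the minimal automaton attached to $u$ is independent of which $\mathscr{A}\in\mathrm{AUT}_k(u)$ one starts from (all such automata generate the same $u$, hence are equivalent, hence share the same minimal factor). Once that coherence is confirmed, the theorem follows with no further work; the burden has been entirely transferred to the earlier results, which is exactly the reduction the theorem advertises.
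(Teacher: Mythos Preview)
Your proposal is correct and matches the paper's approach exactly: the paper states that the theorem follows ``immediately'' from Proposition~\ref{pr1} and Formula~(\ref{eq1}), and you have simply written out that one-line sandwich argument in full. The only additional care you took---checking that the minimal factor is the same regardless of which $\mathscr{A}\in\mathrm{AUT}_k(u)$ one starts from---is already absorbed into the paper's definition of $\mathscr{A}_u$ as \emph{the} common factor of all members of $\mathrm{AUT}_k(u)$.
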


\noindent {\bf Remark.} {As it has been indicated above, the opacity of a finite $k$-automaton $\mathscr{A}$ measures in a certain sense the intrinsic noise produced by the default of $\mathscr{A}$. So the opacity complexity of a $k$-automatic sequence $u$
measures the intrinsic noise of its internal finite $k$-automaton with the simplest structure (\emph{i.e.}, its intrinsic finite $k$-automaton $\mathscr{A}_u$), and describes the distortion between $u$ and the words defined by the $k$-ary expansion of $n$'s. }

\section{Computation of opacity complexity}

In the section, we shall concentrate our attention on the case where $Y=\Sigma_k$, 
and $\mathbf{d}=\mathbf{d}_{\Sigma_k}$. In this case,
we already know how to compute the opacity of a finite $k$-automaton $\mathscr{A}$ from 
the structure of the graph of $\mathscr{A}$ (see \cite{yao1}, and also \cite{huchen}).
For the convenience of the potential readers, in the following we shall recall all the needed results.
 To simplify the notation, we shall also let $\Omega(u)$ and $\omega (u)$ 
 denote respectively $\Omega_{\mathbf{d}}(u)$ and $\omega_{\mathbf{d}}(u)$.
Then
\begin{eqnarray*}
M_k=\Omega(\mathscr{I}_{k})=\sup_{\sigma \in \Sigma_k ^{\mathbb{N}}}\inf_{o\in \Sigma_k^{S}}\mathbf{d}(o(\mathscr{I}_{k}\sigma ),\sigma )=\frac{1}{2},
\end{eqnarray*}
where the supremum is attained for all  $\sigma \in \Sigma_k ^{\mathbb{N}}$ with $\sigma(0)\neq \sigma(1)$.

Let $\mathscr{A}=(S,i_0,\Sigma_k, t)$ be a finite $k$-automaton, and $\mathfrak{p}$ a path on $\mathscr{A}$. 
We shall say that $\mathfrak{p}$
is $\emph{homogeneous}$ if for each vertex $s$ of $\mathfrak{p}$, all the
arrows over $\mathfrak{p}$ incident into $s$ are of the same type. Otherwise
we say that $\mathfrak{p}$ is \emph{in}$\emph{homogeneous}$.
With this definition, we have the following result (see \cite{yao1}; also see \cite{huchen}).

\begin{theorem}\label{thm2}
\label{thm1} Let $\mathscr{A}=(S,i_0,\Sigma_k, t)$ be a finite $k$-automaton. Then we have the following two possibilities:
\begin{itemize}
\item[{(a)}] $\mathscr{A}$ is transparent if and only if all the paths on $\mathscr{A}$ are homogeneous;

\item[{(b)}] if $\mathscr{A}$ is not transparent, then $\Omega (\mathscr{A}%
)=1/2^{\ell -1}$, where $\ell \geqslant 2$ is the length of the shortest
inhomogeneous paths on $\mathscr{A}$.
\end{itemize}
\end{theorem}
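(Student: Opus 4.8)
The plan is to reduce the whole statement to an analysis of the \emph{inner} optimisation $\inf_{o\in\Sigma_k^S}\mathbf d(o(\mathscr A\sigma),\sigma)$ for a single fixed input $\sigma$, and only afterwards take the supremum over $\sigma$. Write $s_m:=(\mathscr A\sigma)(m)=t(i_0,\sigma[0,m])$ for the successive vertices of the path $\mathfrak p_\sigma$, so that along $\mathfrak p_\sigma$ the arrow incident into $s_m$ carries the label $\sigma(m)$. Since $o(\mathscr A\sigma)(m)=o(s_m)$, requiring an output function to match the input on the prefix $[0,M)$ is exactly the system of constraints $o(s_m)=\sigma(m)$ for all $m<M$. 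These constraints are simultaneously satisfiable if and only if, whenever $s_m=s_{m'}$ with $m,m'<M$, one has $\sigma(m)=\sigma(m')$; that is, if and only if no vertex is entered along $\mathfrak p_\sigma[0,M)$ by two arrows of different types.

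First I would make this precise with a short greedy argument. Define the first \emph{conflict step} $m^\ast(\sigma)$ to be the least $m$ such that $s_m=s_{m'}$ for some $m'<m$ with $\sigma(m)\neq\sigma(m')$, and $m^\ast(\sigma)=\infty$ if no such $m$ exists. Setting $o(s)$ equal to the label of the \emph{first} arrow of $\mathfrak p_\sigma$ incident into $s$ produces an output agreeing with $\sigma$ at every step before $m^\ast(\sigma)$, while no output function can agree past $m^\ast(\sigma)$, since the two conflicting labels cannot both be matched by the single value $o(s_{m^\ast})$. Hence the infimum is attained and
\begin{equation*}
\inf_{o\in\Sigma_k^S}\mathbf d(o(\mathscr A\sigma),\sigma)=2^{-m^\ast(\sigma)},
\end{equation*}
with the convention $2^{-\infty}=0$. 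The crucial observation is that $m^\ast(\sigma)<\infty$ precisely when the path $\mathfrak p_\sigma$ is inhomogeneous, and that in this case its prefix of length $m^\ast(\sigma)+1$ is an inhomogeneous path whose \emph{only} vertex with two differently-labelled arrows incident into it is its endpoint.

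Part (a) then follows immediately: $\Omega(\mathscr A)=\sup_\sigma 2^{-m^\ast(\sigma)}=0$ if and only if $m^\ast(\sigma)=\infty$ for every $\sigma$, i.e.\ every path issued from $i_0$ is homogeneous (any finite path is a prefix of some infinite $\sigma$, and inhomogeneity of a finite path is inherited by every infinite extension, so quantifying over finite or over infinite paths gives the same condition). For part (b), suppose $\mathscr A$ is not transparent and let $\ell\ge 2$ be the minimal number of edges of an inhomogeneous path. For any $\sigma$, a conflict at step $m^\ast(\sigma)$ exhibits an inhomogeneous prefix of length $m^\ast(\sigma)+1$, whence $m^\ast(\sigma)+1\ge\ell$ and $2^{-m^\ast(\sigma)}\le 2^{-(\ell-1)}$, giving $\Omega(\mathscr A)\le 1/2^{\ell-1}$. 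Conversely, a shortest inhomogeneous path must have its conflicting repetition realised exactly at its last edge (otherwise a strictly shorter prefix would already be inhomogeneous, contradicting minimality), so reading off its labels and extending arbitrarily to an infinite $\sigma$ yields $m^\ast(\sigma)=\ell-1$, and hence $\Omega(\mathscr A)\ge 1/2^{\ell-1}$. Combining the two inequalities gives $\Omega(\mathscr A)=1/2^{\ell-1}$.

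The main obstacle I anticipate is bookkeeping rather than genuine difficulty: pinning down that the infimum is attained at the first conflict (the greedy assignment is optimal, and no later-chosen value can rescue an earlier conflict), and that the combinatorial quantity ``length of the shortest inhomogeneous path'' coincides with ``one plus the earliest conflict step over all inputs''. I would be especially attentive to the status of the initial vertex $i_0$, which carries no incoming arrow at its starting position and so contributes to inhomogeneity only through genuine revisits, and to the fact that a shortest inhomogeneous path is necessarily ``tight'', its unique conflict occurring at the terminal edge.
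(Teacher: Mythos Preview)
Your proposal is correct and follows essentially the same route as the paper's proof: both compute the inner infimum $\inf_o\mathbf d(o(\mathscr A\sigma),\sigma)$ exactly as $2^{-m^\ast(\sigma)}$ via the greedy output (the paper's $h$ is precisely your $m^\ast(\sigma)$, the length of the longest homogeneous prefix), and then deduce (a) and (b) by taking the supremum and matching $m^\ast$ against the length of a shortest inhomogeneous path. The only cosmetic difference is that the paper first passes to finite inputs $\eta\in\Sigma_k^\ast$ before carrying out the analysis, whereas you work with infinite $\sigma$ and observe the equivalence via prefixes.
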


\begin{proof}
First note that from the definitions, we have
\begin{eqnarray}\label{def}
\Omega (\mathscr{A})=\sup_{\eta \in \Sigma_k ^*}\inf_{o\in \Sigma_k ^{S}}\mathbf{d}(o(\mathscr{A}\eta ),\eta ).
\end{eqnarray}

(a) Suppose that $\mathscr{A}$ is transparent, \emph{i.e.}, $\Omega (\mathscr{A})=0$. Let $\mathfrak{p}$ be a path on $\mathscr{A}$. Then there exists $\eta\in \Sigma_k^*$ such that  $\mathfrak{p}= \mathfrak{p}_{\eta}$, where $\mathfrak{p}_{\eta}$
is the path determined by $\eta$, defined in Formula (\ref{eq0}). By Formula (\ref{def}) and the fact that $\Sigma_k ^{S}$ is finite, we can find $o_{\mathfrak{p}}\in \Sigma_k^{S}$ such that $o_{\mathfrak{p}}(\mathscr{A}\eta)=\eta$. Thus for each
vertex $s$ of the path $\mathfrak{p}$, all the arrows incident into $s$ over $\mathfrak{p}$ are labelled by $o_{\mathfrak{p}}(s)$. So $\mathfrak{p}$ is
homogeneous.

Conversely, assume that all the paths on $\mathscr{A}$ are
homogeneous. Take $\eta\in \Sigma_k^*$ and fix $\tau \in
\Sigma_k$. Then the path $\mathfrak{p}_{\eta}$ is homogeneous.
For all $s\in S$, if $s$ is not a vertex of $\mathfrak{p}_{\eta}$, then
we set $o_{\eta}(s):=\tau $. Otherwise we assign to $o_{\eta}(s)$ the label of the arrows incident into $s$ over $\mathfrak{p}_{\eta}$.
Then $o_{\eta}\in \Sigma_k ^{S}$ and $o_{\eta}(\mathscr{A}\eta)=\eta$. So
$\Omega (\mathscr{A})=0$, and $\mathscr{A}$ is transparent.

(b) If $\mathscr{A}$ is not transparent, it contains
some inhomogeneous paths. Since $\mathscr{A}$ is accessible, it suffices to show that for all $\eta\in \Sigma_k^*$ with $\mathfrak{p}_{\eta}$
inhomogeneous, we have
\begin{equation*}
\inf_{o\in \Sigma_k ^{S}}\mathbf{d}(o(\mathscr{A}\eta ),\eta )=2^{-h},
\end{equation*}
where $h\geqslant 1$ is the length of the longest homogeneous part of $\mathfrak{p}_{\eta}$.

Write $\eta=(\eta(j))_{0\leqslant j\leqslant m}$. Then $m\geqslant h$, and we can find an integer $j\ (0\leqslant j<h)$ such that
$\eta(j)\neq \eta(h)$, but $t(i_0, \eta[0,j])=t(i_0, \eta[0,h])$. So for all $o\in \Sigma_k^{S}$, we have
\begin{equation*}
o(t(i_0, \eta[0,j]))\neq \eta(j)\text{ or }o(t(i_0, \eta[0,h]))\neq \eta(h),
\end{equation*}
from which we obtain $\mathbf{d}(o(\mathscr{A}\eta),\eta)\geqslant2^{-h}$.

To end the proof, it remains to find  $o_{\eta}\in \Sigma_k^{S}$
such that%
\begin{equation}\label{eq2}
\mathbf{d}(o(\mathscr{A}\eta),\eta)=2^{-h}.
\end{equation}
For all $s\in S$, put $o_{\eta}(s):=\eta(l)$ if $s=t(i_0, \eta[0,l])$ for some integer $l\ (0\leqslant l< h)$. 
Otherwise put $o_{\eta}(s):=\tau $, where as above $\tau $ is a prefixed element of $\Sigma_k $.
From the definition of $h$, we deduce at once
\begin{equation*}
o_{\eta}(\mathscr{A}\eta)[0,h-1]=\eta[0,h-1],\text{
and }o_{\eta}(\mathscr{A}\eta)(h)\neq \eta(h),
\end{equation*}%
and the desired equality (\ref{eq2}) comes.
\end{proof}

Let $\mathscr{A}=(S,i_0,\Sigma_k, t)$ be a finite $k$-automaton, and $s$
a state of $\mathscr{A}$. Since $\mathscr{A}$ is supposed to be accessible, thus $s$ must have some incident arrows. We call $s$ a
\emph{homogeneous state} of type $\sigma\ (\sigma \in \Sigma_k )$ if over
the graph of $\mathscr{A}$, all the incident arrows into $s$ are of type $\sigma $.
Otherwise we call it {\em inhomogeneous}. Note here that inhomogeneous paths are determined by inhomogeneous states,
from which we can compute the opacity by Theorem \ref{thm1}.

A finite $k$-automaton is called \emph{homogeneous} if all its states  are
homogeneous. The following result tells us that transparent automata and homogeneous automata are tightly
related.

\begin{corollary}\label{cor1}
Let $\mathscr{A}=(S,i_0,\Sigma_k, t)$ be a finite $k$-automaton. If $\mathscr{A}$ is homogeneous, then it is transparent. The
converse also holds if $\mathscr{A}$ is strictly accessible.
\end{corollary}

\begin{proof}

The first part comes directly from Theorem \ref{thm1}. 
Indeed we can do better. For all $s\in S$, since $s$ is homogeneous, 
we let $o(s)$ denote the type  of $s$.
Thus we obtain $o\in \Sigma_k ^{S}$ such 
that $o(\mathscr{A}\sigma)=\sigma$, for all 
$\sigma\in \Sigma_k^*$.

Now assume that the finite automaton 
$\mathscr{A}$ is transparent and
strictly accessible. Let $s\in S$. 
If $(s_{1},\sigma_{1},s)$ and $(s_{2},\sigma _{2},s)$ 
are edges of $\mathscr{A}$, by the strict
accessibility, we can find a path $\mathfrak{p}$
on $\mathscr{A}$ which contains these two edges. But 
$\mathscr{A}$ is transparent, so $\mathfrak{p}$ is 
homogeneous, and then $\sigma_{1}=\sigma _{2}$, 
\emph{i.e.}, $s$ is homogeneous. 
\end{proof}

Likewise we also have the following characterization 
of opaque automata.

\begin{corollary}\label{cor2}
A finite $k$-automaton $\mathscr{A}=(S,i_0,\Sigma_k ,t)$ is opaque if and only if there exist $\sigma_1,\sigma_2\in\Sigma_k$ such that $\sigma _{1}\neq \sigma _{2}$,
and $t(i_0, \sigma_1)=t(i_0, \sigma_1\sigma_2)$.
\end{corollary}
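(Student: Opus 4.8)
The plan is to reduce the corollary to Theorem~\ref{thm2} by translating the numerical identity $\Omega(\mathscr{A})=M_k$ into a statement about the length of the shortest inhomogeneous path. By definition $\mathscr{A}$ is opaque iff $\Omega(\mathscr{A})=M_k=1/2$. First I would invoke Theorem~\ref{thm2}: either $\mathscr{A}$ is transparent, in which case $\Omega(\mathscr{A})=0$, or it is not transparent and $\Omega(\mathscr{A})=1/2^{\ell-1}$, where $\ell\geqslant 2$ is the length of the shortest inhomogeneous path on $\mathscr{A}$. Hence $\Omega(\mathscr{A})=1/2$ holds precisely when $\mathscr{A}$ is not transparent and $\ell=2$.

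Next I would observe that a path of length one is automatically homogeneous: its single edge is the only arrow incident (over the path) into its endpoint, so no vertex can receive two arrows of different types. Consequently $\ell\geqslant 2$ as soon as an inhomogeneous path exists, and the condition ``$\mathscr{A}$ is not transparent and $\ell=2$'' is equivalent to ``$\mathscr{A}$ admits an inhomogeneous path of length~$2$''.

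It then remains to characterize inhomogeneous paths of length~$2$ in terms of the transition function. Every such path starts at $i_0$ and is of the form $\mathfrak{p}_{\sigma_1\sigma_2}$ for some $\sigma_1,\sigma_2\in\Sigma_k$, consisting of the two edges $(i_0,\sigma_1,t(i_0,\sigma_1))$ and $(t(i_0,\sigma_1),\sigma_2,t(i_0,\sigma_1\sigma_2))$. With only two edges present, the sole way a vertex can receive two arrows over the path is for both edges to terminate at the same vertex, that is, $t(i_0,\sigma_1)=t(i_0,\sigma_1\sigma_2)$; and such a configuration is inhomogeneous exactly when the two labels differ, that is, $\sigma_1\neq\sigma_2$. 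Combining the three steps yields the stated equivalence.

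The main obstacle I expect is the last step: one must verify carefully that the endpoint-coincidence $t(i_0,\sigma_1)=t(i_0,\sigma_1\sigma_2)$ is the unique mechanism producing inhomogeneity in a two-edge path (ruling out any spurious coincidence at $i_0$ or elsewhere), and that this coincidence together with $\sigma_1\neq\sigma_2$ captures exactly the defining condition of the corollary. The remaining steps are routine once Theorem~\ref{thm2} and the triviality of length-one paths are in place.
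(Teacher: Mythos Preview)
Your proposal is correct and follows exactly the approach of the paper, which simply says ``It suffices to apply Theorem~\ref{thm2}.(b) with $\ell=2$.'' You have merely unpacked this one-line application: opacity means $\Omega(\mathscr{A})=1/2$, which by Theorem~\ref{thm2} forces $\ell=2$, and a length-$2$ path $\mathfrak{p}_{\sigma_1\sigma_2}$ is inhomogeneous precisely when its two edges share the same endpoint $t(i_0,\sigma_1)=t(i_0,\sigma_1\sigma_2)$ with distinct labels $\sigma_1\neq\sigma_2$. The ``spurious coincidence at $i_0$'' you worry about is not an issue, since inhomogeneity requires a vertex to receive \emph{two} incoming arrows over the path, and with only two edges this forces exactly the endpoint coincidence you wrote down.
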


\begin{proof}
It suffices to apply  Theorem \ref{thm2}.(b) 
with~$\ell=2$.
\end{proof}

\noindent {\bf Remark.} {Let $u=(u(n))_{n\geqslant 0}$ be a sequence generated by a finite $k$-automaton without output or with injective output function.
Then by Corollary~\ref{cor2}, we obtain that $u$ is opaque if and only there exist $\sigma_1,\sigma_2\in\Sigma_k$ such that $\sigma _{1}\neq \sigma _{2}$,
and $u(\sigma_1)=u(\sigma_1k+\sigma_2)$. See Example \ref{morse} and Example \ref{3-automaton}.}

\section{Some examples}

In the section, we compute the opacity 
of some classical automatic sequences.

\begin{example}\label{constants}
 {\em (constants)} The one-state automaton is the intrinsic finite $k$-automaton of all constant sequences. 
 Thus all constant sequences are opaque. Indeed they lose all the information of the words defined by the $k$-ary expansion of $n$'s.
\end{example}

\begin{example}\label{periodic}
{\em ($2$-periodic)} The identity automaton is the intrinsic finite $2$-automaton of purely $2$-periodic 
sequences. This automaton is homogeneous, hence  all purely $2$-periodic sequences are transparent.
\end{example}

\begin{example}\label{morse}
{\em (Thue-Morse)} The Thue-Morse sequence $(u(n))_{n\geqslant 0}$ in $\{0,1\}$ satisfies
\begin{eqnarray*}
u(0)=0,\ u(2n)=u(n),\ \text{and}\ u(2n+1)=1-u(n)\quad (n\geqslant 0).
\end{eqnarray*}
Its intrinsic finite $2$-automaton is given by Figure \ref{thue-morse} (see for example \cite[p.\,174]{AS}).
\vskip -0.3cm
\begin{figure}[ht]
    
\begin{tikzpicture}[shorten >=1pt,node distance=2cm,auto,initial text=]
 
  \node[initial, state]  (A)              {$A$};
  \node[state]  (B) [right=of A] {$B$};
 
  
  \path[->] (A) edge [in=60,out=120,loop] node[above] {$0$} (A)
             (A) edge [bend left, above]   node {$1$} (B)
             (B) edge [bend left, below]   node {$1$} (A)
            (B) edge [in=60,out=120,loop] node[above] {$0$} (B) ; 
\end{tikzpicture}

\centerline{Output function: $o(A) = 0$, $o(B) = 1$}
\vskip -0.2cm
\caption{Thue-Morse\ Automaton $\mathscr{A}_{tm}$} \label{thue-morse}
\end{figure}
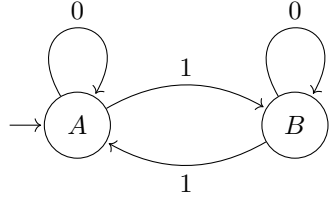

\noindent By Corollary \ref{cor2}, the Thue-Morse sequence is opaque, for $t(A,1)=B=t(A,10)$.
This result also comes from the Remark after Corollary \ref{cor2}, since $u(1)=u(2)$.
\end{example}

\begin{example}\label{period-doubling}
{\em (Period-doubling)} For all integers $n\geqslant 0$, define
\begin{eqnarray*}
u(n)=v_2(n+1)\pmod 2,
\end{eqnarray*}
where $v_2(n+1)$ is the greatest integer $r\geqslant 0$
such that $2^r$ divides $n+1$. Then we call $u_{pd}=(u(n))_{n\geqslant 0}$ the period-doubling sequence,
whose intrinsic finite $2$-automaton is given by 
Figure~\ref{PD} (see for example \cite[p.\,176]{AS}).
\vskip -0.3cm
\begin{figure}[ht]
    
\begin{tikzpicture}[shorten >=1pt,node distance=2cm,auto,initial text=]
 
  \node[initial,state]  (A)              {$A$};
  \node[state]  (B) [right=of A] {$B$};

  
  \path[->] (A) edge [in=60,out=120,loop] node[above] {$0$} (A)
             (A) edge [bend left, above]   node {$1$} (B)
             (B) edge [bend left, below]   node {$0,1$} (A)
            ; 
\end{tikzpicture}

\centerline{Output function: $o(A) = 0$, $o(B) = 1$}
\vskip -0.2cm
\caption{Period-doubling automaton $\mathscr{A}_{pd}$}\label{PD}
\end{figure}
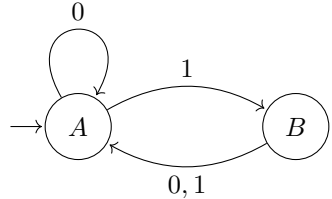

\noindent The initial state $A$ is the only inhomogeneous state of $\mathscr{A}_{pd}$. Thus by Theorem \ref{thm1}, we have $\Omega(\mathscr{A}_{pd})=1/4$ (the shortest inhomogeneous path is given by $\sigma=011$),
thus the opacity complexity of the period-doubling sequence equals $1/2$.
\end{example}

\begin{example}\label{golay-shapiro-rudin}
{\em (Golay-Shapiro)} The Golay-Shapiro(-Rudin) sequence 
$(u(n))_{n\geqslant 0}$ satisfies (for all integers $n\geqslant 0$)
\begin{eqnarray*}
u(0)=1,\ u(2n)=u(n),\ u(4n+1)=u(n),\ 
\text{and}\ u(4n+3)=-u(2n+1).
\end{eqnarray*}
Its intrinsic finite $2$-automaton is given in
Figure~\ref{RS} (see for example \cite[p.\,154]{AS}).

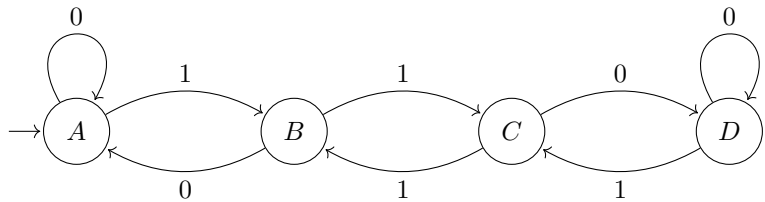
\begin{figure}[ht]
    
\begin{tikzpicture}[shorten >=1pt,node distance=2cm,auto,initial text=]
 
  \node[initial,state]  (A)              {$A$};
  \node[state]  (B) [right=of A] {$B$};
  \node[state]  (C) [right=of B] {$C$};
  \node[state]  (D) [right=of C] {$D$};
  \path[->] (A) edge [in=60,out=120,loop] node[above] {$0$} (A)
             (A) edge [bend left, above]   node {$1$} (B)
             (B) edge [bend left, below]   node {$0$} (A)
             (B) edge [bend left, above]   node {$1$} (C)
             (C) edge [bend left, below]   node {$1$} (B)
             (C) edge [bend left, above]   node {$0$} (D)
             (D) edge [bend left, below]   node {$1$} (C)
            (D) edge [in=60,out=120,loop] node[above] {$0$} (D) ; 
\end{tikzpicture}

\centerline{Output function: $o(A) = o(B) = 1$, 
$o(C) = o(D) =-1$.}

\caption{Golay-Shapiro(-Rudin) automaton $\mathscr{A}_{gsr}$}\label{RS}

\end{figure}

\noindent The $2$-automaton in Figure~\ref{RS} is 
homogeneous, hence by Corollary \ref{cor1}, the 
Golay-Shapiro(-Rudin) sequence is transparent.
\end{example}

\begin{example}\label{paperfolding}
{\em (Paperfolding)} The paperfolding sequence  $(u(n))_{n\geqslant 0}$ satisfies
\begin{eqnarray*}
u(2^n)=1,\ \text{and}\ u(2^n(2m+1))=(-1)^m\ (\forall n,m\geqslant 0).
\end{eqnarray*}
Its intrinsic finite $2$-automaton is given by 
Figure~\ref{pf} (see for example \cite[p.\,156]{AS}).

\begin{figure}[ht]
    
\begin{tikzpicture}[shorten >=1pt,node distance=2cm,auto,initial text=]
 
  \node[initial, state]  (A)              {$A$};
  \node[state]  (B) [right=of A] {$B$};
  \node[state]  (D) [below=of A] {$D$};
  \node[state]  (C) [below=of B] {$C$};
  \path[->] (A) edge [in=60,out=120,loop] node[above] {$0$} (A)
             (A) edge [bend left, above]   node {$1$} (B)
             (B) edge [bend left, below]   node {$0$} (A)
             (D) edge [in=-120,out=-60,loop] node[below] {$0$} (D) 
             (C) edge [in=-120,out=-60,loop] node[below] {$1$} (C)
             (D) edge node [left] {$1 \ $} (B)
             (B) edge node [right] {$1$} (C)
             (C) edge node [below] {$0$} (D); 
\end{tikzpicture}

\centerline{Output function: $o(A) = 
o(B) =1$, $ o(C) =o(D) = -1$.}

\caption{Paperfolding automaton $\mathscr{A}_{pf}$}\label{pf}
\end{figure}
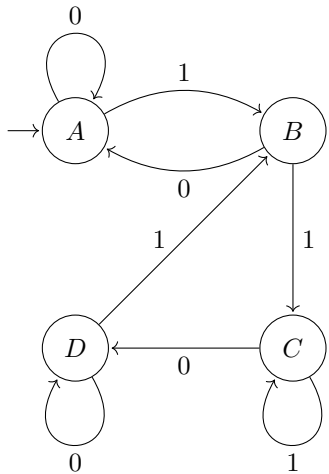

\noindent The $2$-automaton in Figure~\ref{pf} is 
homogeneous. Hence, by Corollary \ref{cor1}, the 
paperfolding sequence is transparent.
\end{example}

\begin{example}\label{baum-sweet}
{\em (Baum-Sweet)}  The Baum-Sweet sequence  $(u(n))_{n\geqslant 0}$ satisfies
\begin{eqnarray*}
u(0)=1,\ u(2n+1)=u(n),\ u(4n)=u(n),\ \text{and}\ u(4n+2)=0\ (n\geqslant 0).
\end{eqnarray*}
Its intrinsic finite $2$-automaton is given 
in Figure~\ref{bs}
(see for example \cite[p.\,157]{AS}).

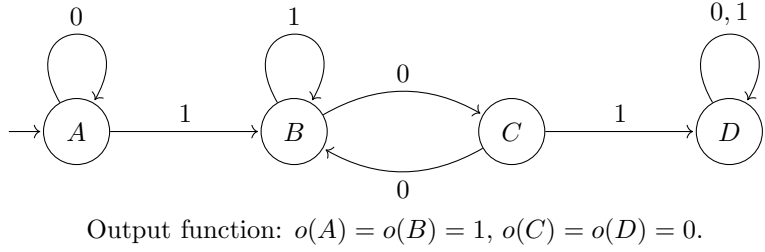
\begin{figure}[ht]
    
\begin{tikzpicture}[shorten >=1pt,node distance=2cm,auto,initial text=]
 
  \node[initial,state]  (A)              {$A$};
  \node[state]  (B) [right=of A] {$B$};
  \node[state]  (C) [right=of B] {$C$};
  \node[state]  (D) [right=of C] {$D$};
 
  
  \path[->] (A) edge [in=60,out=120,loop] node[above] {$0$} (A)
                edge                node {$1$} (B)
            (B) edge [in=60,out=120,loop] node[above] {$1$} (B)      
                edge [bend left, above]   node {$0$} (C)
            (C) edge              node {$1$} (D)
                edge [bend left, below] node{$0$} (B)
            (D) edge [in=60,out=120,loop] node[above]{$0,1$} (D);
\end{tikzpicture}

\centerline{Output function: $o(A) = o(B) = 1$, 
$o(C) = o(D) = 0$.}

\caption{Baum-Sweet automaton $\mathscr{A}_{bs}$}\label{bs}
\end{figure}

\noindent The inhomogeneous states of $\mathscr{A}_{bs}$ are $B$ and $D$.  Hence by Theorem \ref{thm1}, we have 
$\Omega(\mathscr{A}_{bs})=1/4$ (the shortest 
inhomogeneous path is given by $\sigma=100$),
thus the opacity complexity of the Baum-Sweet 
sequence equals $1/2$. It is worthy to point out here
that the Baum-Sweet sequence is opaque for the opacity
complexity attached to $\mathbf{d}_2$ (see \cite{allyao}).
As we have already indicated above, this is due to the fact
that the theory of opacity complexity heavily relies
on the prefixed comparison method.
\end{example}

\begin{example}\label{hanoi}
{\em (Tower of Hanoi)} 
The Tower of Hanoi sequence is a sequence on $6$
symbols, obtained from the classical Hanoi puzzle,
with ``an infinite number of disks''. Its intrinsic 
finite $2$-automaton is given in Figure~\ref{han}
(see for example \cite[p.\,181]{AS}).

\begin{figure}[ht]
    
\begin{tikzpicture}[shorten >=1pt,node distance=2cm,auto,initial text=]
 
  \node[initial,state]  (A)              {$A$};
  \node[state]  (B) [right=of A] {$B$};
  \node[state]  (C) [right=of B] {$C$};
  \node[state]  (D) [below=2.5cm of A] {$D$};
  \node[state]  (E) [below=2.5cm of B] {$E$};
  \node[state]  (F) [below=2.5cm of C] {$F$};
  \path[->] (A) edge [in=60,out=120,loop] node[above] {$0$} (A)
             (B) edge node [above] {$0$} (A)
             (F) edge node [right] {$0$} (C)
             (B) edge [bend left=20, above]   node {$1$} (C)
             (C) edge [bend left=20, below]   node {$1$} (B)
             (E) edge [bend left=20, above]   node {$1$} (F)
             (F) edge [bend left=20, below]   node {$1$} (E)
             (D) edge node [above] {$0$} (E)
             (E) edge [bend left=10, above] node[above] {$0 \ $} (C)
             (C) edge [bend left=10, above] node[below] {$\ 0$} (E)
             (A) edge [bend right=20, left]   node {$1$} (D)
             (D) edge [bend right=20, right]   node {$1$} (A); 
\end{tikzpicture}

$$\alignedat6
\ \ \ \ \text{Output function: } \ 
&o(A) &&= a, \ &&o(B) &&=  \overline{a}, \ 
 &&o(C) &&= c, \\ 
&o(D) &&=  \overline{c}, \ &&o(E) &&= b, \
 &&o(F) &&= \overline{b}. \\
\endalignedat
$$

\caption{Tower of Hanoi Automaton $\mathscr{A}_{th}$}\label{han}
\end{figure}

\noindent The inhomogeneous states of $\mathscr{A}_{th}$ are $A$, $C$, and $E$. Hence by Theorem \ref{thm1}, we have $\Omega(\mathscr{A}_{th})=1/4$ 
(the shortest inhomogeneous path is given by $\sigma=011$),
thus the opacity complexity of the the Tower of Hanoi sequence equals $1/2$.
\end{example}

\begin{example}\label{3-automaton}
{\em (A $3$-automatic sequence)}
An example of $3$-automatic sequences, similar to the
Thue-Morse sequence, is the sequence $(z(n))_{n \geqslant  0}$
defined by: $z(n)$ is the sum, reduced modulo $3$, of 
the ternary digits of the integer $n$. Its intrinsic 
finite $3$-automaton is given in Figure~\ref{ter}.

\begin{figure}[ht]
    
\begin{tikzpicture}[shorten >=1pt,node distance=4cm,auto,initial text=]
 
  \node[initial,state]  (A) {$A$};
  \node[state]  (B) [right=of A] {$B$};
  \node[state]  (C) [below=3.5cm of A] {$C$};
  \path[->] (A) edge [in=60,out=120,loop] node[above] {$0$} (A)
  (B) edge [in=60,out=120,loop] node[above] {$0$} (B)
  (C) edge[loop, out=240, in=300, looseness=8] node[below]{$0$} (C)
             (A) edge [bend left=18, above] node {$1$} (B)
             (B) edge [bend left=18, below]   
             node {$2\ $} (A)
             (B) edge [bend left=15, right]   
             node {$\ 1$} (C)
             (C) edge [bend left=15, left]   node {$2\ $} (B)
             (C) edge [bend left=15, left] node {$1$} (A)
             (A) edge [bend left=15, right] node {$2$} (C);
\end{tikzpicture}

Output function: \ 
$o(A) = 0$, $o(B) = 1$,  
$o(C) = 2$.

\caption{Ternary sum of digits modulo $3$  Automaton $\mathscr{A}_{ter}$}\label{ter}
\end{figure}
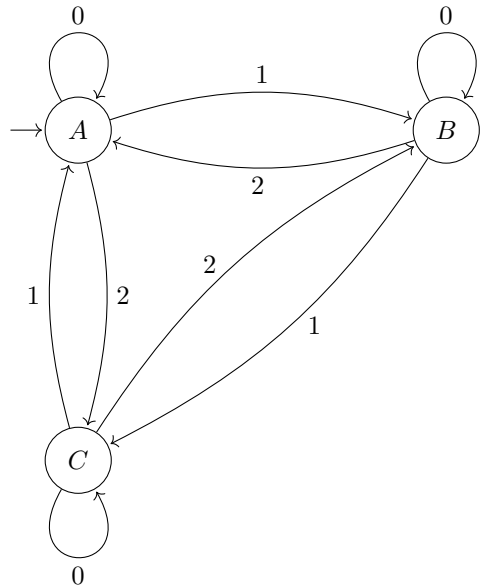
\noindent By Corollary \ref{cor2}, this $3$-automatic sequence is opaque, 
for $t(A,1)=B=t(B,10)$, just as for the Thue-Morse sequence. This result also comes from the Remark after Corollary \ref{cor2}, since $z(1)=z(2)$.
\end{example}

\section{Further study}

In the work in preparation \cite{allyao}, we shall continue our above study to consider opacity complexity attached to 
the comparison method $\mathbf{d}_2$, and discuss its various properties.
\medskip

\noindent \textbf{Acknowledgments.} Jia-Yan Yao would like to thank 
heartily the National Natural Science Foundation of China (Grants 
No.~12231013 and No.~11871295) for partial financial support.

\end{document}